\newtheorem{theorem}{Theorem}[section]
\newtheorem{lemma}[theorem]{Lemma}
\newtheorem{proposition}[theorem]{Proposition}
\newtheorem{corollary}[theorem]{Corollary}
\theoremstyle{definition}
\newtheorem{definition}[theorem]{Definition}
\newtheorem{remark}[theorem]{Remark}
\newtheorem{example}[theorem]{Example}
\newcommand{\RS}{\operatorname{RS}}
\newcommand{\Per}{\operatorname{Per}}
\newcommand{\wH}{\operatorname{wt_H}}
\def\F{\mathbb{F}}
\def\Fq{{\F_q}}
\def\Z{\mathbb{Z}}
\def\cov{\mathrel{<\kern-.6em\raise.015ex\hbox{$\cdot$}}}
\newcommand{\rmv}[1]{}
\begin{document}
\title{The permutation group of Reed-Solomon codes over arbitrary points} 


\author{%
    \thanks{We thank Abhay Chaudhary, Maria Chara, and Rachel Petrik for all their invaluable discussions and comments.}
    \thanks{This project started at the Rethinking Number Theory workshop in June 2022. We want to thank the organizers for the workshop, and are grateful for the supportive, collaborative research environment this workshop provided. The workshop was supported by the Number Theory Foundation, the American Institute of Mathematics, and the University of Wisconsin-Eau Claire.}
    \IEEEauthorblockN{Eduardo Camps-Moreno}
    \IEEEauthorblockA{Department of Mathematics \\
    Virginia Tech\\
    Blacksburg, VA USA\\
    e.camps@vt.edu}
    \and
  \IEEEauthorblockN{Jun Bo Lau
  \thanks{Jun Bo Lau was partially supported by the European Research Council (ERC) under the European Union’s Horizon 2020 research and innovation programme (grant agreement ISOCRYPT - No. 101020788), by the Research Council KU Leuven grant C14/24/099 and by CyberSecurity Research Flanders with reference number VR20192203.}}
  \IEEEauthorblockA{Department of \\Electrical Engineering\\
  KU Leuven\\
  Leuven, Belgium\\
  junbo.lau@kuleuven.be}
  \and
    \IEEEauthorblockN{Hiram H. L\'opez,\\~\IEEEmembership{Senior Member,~IEEE} \thanks{Hiram H. L\'{o}pez was partially supported by NSF DMS-2401558.}}
  \IEEEauthorblockA{Department of Mathematics \\
  Virginia Tech\\
  Blacksburg, VA USA\\
  hhlopez@vt.edu}
  \and
  \IEEEauthorblockN{Welington Santos}
\IEEEauthorblockA{Department of Mathematics, \\Statistics, \& Computer Science\\
University of Wisconsin-Stout \\
Menomonie, WI, USA\\
santosw@uwstout.edu}
}

\maketitle

\begin{abstract}
In this work, we prove that the permutation group of a Reed-Solomon code is given by the polynomials of degree one that leave the set of evaluation points invariant. Our results provide a straightforward proof of the well-known cases of the permutation group of the Reed-Solomon code when the set of evaluation points is the whole finite field or the multiplicative group.
\end{abstract}

\section{Introduction}
Understanding the automorphism group of a code provides valuable information on its symmetries, which can be used for efficient error correction. The optimality in error correction is only one among all the important properties of Reed-Solomon codes, making the study of their automorphism group particularly important. These automorphisms are crucial for decoding methods such as the permutation decoding technique, first introduced by J. MacWilliams in the seminal paper~\cite{MacPermutation_Decoding}, and used in, for example, \cite{9187607, 6507307, 9666868, 9729216, 9965899}. This decoding method has proven especially effective for codes with large automorphism groups.

The permutation group of Reed-Solomon codes is well studied in the literature, focusing on cases where the evaluation points are either the entire finite field or its multiplicative subgroup. Dur~\cite{DUR198769} considers a larger family of maximum distance separable codes called Cauchy codes and shows that the automorphism group of a Reed-Solomon code is isomorphic to $\left(\F^{*}_{q}\rtimes Gal(\F_q)\right)\times S_{n}$ when $k=1$ or $k=n-1$, and is isomorphic to a subgroup $G$ of $\left(\F^{*}_{q}\times GL(2,\F_q)\right)\rtimes Gal(\Fq)$ that fixes the evaluation set of the Reed-Solomon code when $2\leq k\leq n-2$. Later, Berger~\cite{10.1007/3-540-54303-1_114} employed modular algebras $A=K[G]$, where $K=GF(p^{l})\subseteq GF(p^{m})$, to represent Reed-Solomon codes as ideals. Berger's work provides an approach to understanding the automorphism group of some extended cyclic codes. In particular, Berger proved in~\cite{10.1007/3-540-54303-1_114} that the automorphism group of a Reed-Solomon code with evaluation set $\Fq$ is the affine group.

In this paper, we use elementary algebraic tools to prove straightforwardly that the permutation group of a Reed-Solomon code is determined by the affine permutations that fix the evaluation set. It is worth noting that the results in \cite{DUR198769} do not extend, at least trivially, to our setting due to technical limitations. For example, Corollary 2 in \cite{DUR198769} establishes a surjective group homomorphism from the group of linear fractional transformations that fixes the evaluation set of $C$ to the permutation group of $C$. Then, the construction yields the desired isomorphism only in the particular cases where the evaluation set is $\mathbb{F}_{q}$ or $\mathbb{F}^{*}_{q}$, and does not extend to more general evaluation sets. Similarly, the method used in \cite{10.1007/3-540-54303-1_114} relies on the code being generated by the specially chosen basis $\Theta_{k}$. Such a basis is lost if some evaluation points are removed from the evaluation set, preventing a straightforward application of the approach to our more general framework.

This paper is organized as follows. Section~\ref{preli} reviews the essential preliminaries on linear codes, the permutation group of a linear code, and Reed-Solomon codes. Section~\ref{affperm} provides a discussion of affine permutations. In Section~\ref{perRS}, we present our main results, characterizing the permutation group of a Reed-Solomon code when the dimension is not equal to $n-1$. We also provide an example to illustrate why the condition that the code dimension is $n-1$ is necessary in Theorem~\ref{24.01.19}. The paper concludes with a summary in Section~\ref{conclusion}.

\section{Preliminaries}\label{preli}
Let $\F_q$ be a finite field with $q$ elements. An $[n,k,d]$ {\it linear code} over $\F_q$ is a $k$-dimensional subspace $C \subseteq \F_q^n$ with {\it minimum distance} $d := \min\{ \wH({ c}) : 0\neq  c \in C\},$ where $\wH( c)$ denotes the Hamming weight of $ c$.

The dual of $C$ with respect to the Euclidean inner product is defined by \[C^{\perp}:= \left\{  w \in \F_q^n:  w \cdot  c = 0 \ \text{ for all }  c \in C \right\},\]
where $w \cdot c$ denotes the standard Euclidean inner product.

We denote the symmetric group of degree $n$ by $S_n$. Any permutation $\pi \in S_n$ defines the map
\[\begin{array}{lccc}
& \F_{q}^n & \to & \F_{q}^n \\
& A=(a_1,\ldots, a_n) & \mapsto & \pi(A):=\left(a_{\pi(1)},\ldots, a_{\pi(n)}\right),
\end{array}\]
which is just a permutation of the entries of $A$.

\begin{definition}\label{24.01.15}
Let $C$ be a linear code. For an element $\pi$ of the symmetric group $S_n$, we define
\[\pi(C) := \{\pi(c) : c \in C \}.\]
\end{definition}

The {\it permutation group} of $C$ is the subgroup of the symmetric group $S_n$ defined by
\begin{equation*}
\Per(C) := \left\lbrace\pi\in S_n: \pi(C) = C \right\rbrace.
\end{equation*}

The permutation group tells us which coordinates of every element $c\in C$ we can permute and still get an element of the code $C$. If $G$ is the generator matrix of a code $C$, the permutation group asks for the columns we can permute in $G$ and still get a generator matrix of the code $C$.

The following is a classical result on the permutation group of a code that is relevant in the rest of the manuscript and can be proven directly from the definition.
\begin{lemma}\label{23.03.19}
For a linear code $C$, $\Per(C)=\Per(C^\perp)$.
\end{lemma}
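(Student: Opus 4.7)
The plan is to prove the two inclusions $\Per(C)\subseteq \Per(C^\perp)$ and $\Per(C^\perp)\subseteq \Per(C)$ by exploiting a single simple observation: any coordinate permutation preserves the standard Euclidean inner product, i.e., $\pi(w)\cdot \pi(c)=w\cdot c$ for all $w,c\in\F_q^n$ and every $\pi\in S_n$. This identity is essentially the definition of $\pi$ (it just relabels coordinates on both vectors simultaneously), so I would state it as a one-line remark and use it as the engine of the argument.

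For the forward inclusion, I would take $\pi\in\Per(C)$ and $w\in C^\perp$ and show that $\pi(w)\in C^\perp$. Given any $c\in C$, the invariance $\pi(C)=C$ gives $\pi^{-1}(c)\in C$ (here I would note that because $C$ is finite-dimensional and $\pi$ acts bijectively on $\F_q^n$, $\pi(C)=C$ forces $\pi^{-1}(C)=C$). Then
\[
\pi(w)\cdot c \;=\; \pi(w)\cdot \pi\bigl(\pi^{-1}(c)\bigr) \;=\; w\cdot \pi^{-1}(c)\;=\;0,
\]
so $\pi(w)\in C^\perp$. This shows $\pi(C^\perp)\subseteq C^\perp$, and since $\pi$ is a linear bijection that preserves dimension, $\pi(C^\perp)=C^\perp$, giving $\pi\in\Per(C^\perp)$.

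For the reverse inclusion, I would invoke the same argument with $C$ replaced by $C^\perp$ to obtain $\Per(C^\perp)\subseteq \Per((C^\perp)^\perp)$, and then use the classical identity $(C^\perp)^\perp=C$ (valid for any linear code over a field with respect to the Euclidean pairing) to conclude $\Per(C^\perp)\subseteq \Per(C)$. Combining both inclusions yields the desired equality.

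There is no real obstacle here; the only subtle point to be careful about is not to conflate $\pi(C^\perp)\subseteq C^\perp$ with equality without invoking the dimension (or bijectivity of $\pi$) argument, and to make sure the identity $\pi(w)\cdot \pi(c)=w\cdot c$ is clearly justified from the definition given just above the lemma.
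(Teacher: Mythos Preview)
Your proposal is correct and follows essentially the same route as the paper's (removed) proof: both use the identity $\pi(w)\cdot\pi(c)=w\cdot c$ to show $\pi\in\Per(C)\Rightarrow\pi(C^\perp)\subseteq C^\perp$, and then obtain the reverse inclusion via $(C^\perp)^\perp=C$. The only cosmetic difference is that the paper writes the preimage as ``there is $c'\in C$ with $\pi(c')=c$'' rather than $\pi^{-1}(c)$, and you are slightly more careful in explicitly upgrading $\pi(C^\perp)\subseteq C^\perp$ to equality via a dimension count.
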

\rmv{
\begin{proof}
($\subseteq$) Let $\pi$ be an element in $\Per(C)$ and $w$ an element in $C^\perp$. We aim to prove that $\pi(w) \in C^\perp$, which means that $c \cdot  \pi(w) = 0$ for any $c \in C$.

Take $c\in C$. As $\pi(C) = C$, there is $c^\prime \in C$ such that $\pi(c^\prime) =  c$. So, we have
\[ c \cdot  \pi(w) = \pi(c^\prime) \cdot \pi(w) = c^\prime \cdot w = 0.\]
Thus, $\pi(w) \in C^\perp$, meaning $\pi \in \Per(C^\perp)$.

($\supseteq$) By the previous paragraph, we have
\[\Per(C^\perp) \subseteq \Per((C^\perp)^\perp) = \Per(C),\]
which completes the proof.
\end{proof}}

The polynomial ring over $\F_q$ is denoted by $\F_q[x]$. Given $k \in \Z^{+}$, $\F_{q}[x]_{<k}$ denotes the set of polynomials of degree less than $k$. An element $f \in \F_{q}[x]$ defines the {\it evaluation map}
\[\begin{array}{lccc}
& \F_{q}^n & \to & \F_{q}^n \\
& A=(a_1,\ldots, a_n) & \mapsto & f(A):=\left(f(a_1),\ldots, f(a_n)\right).
\end{array}\]

For the remainder of the manuscript, $A = (a_1,\ldots, a_n)$ represents an element of $\F_q^n$ such that all the entries differ.
\begin{definition}
The {\it Reed-Solomon} (RS) code with evaluation set $A$ is defined by
\[ \RS(A,k) := \left\{ f(A) : f \in \F_{q}[x]_{<k} \right\}.\]
\end{definition}
Reed-Solomon codes $\RS(A,k)$ are $[n, k, n-k+1]$ codes over $\F_{q}$ with $n \leq q$, which means they are maximum distance separable.

\begin{remark}\label{24.01.10}
Let $f$ and $g$ be elements in $\mathbb{F}_{q}[x]_{<n}$. If $f(A)=g(A)$, then $f=g$. This is because if $f(a_i)=g(a_i)$ for $i=1,\ldots,n$, then $f-g$ is a polynomial of degree less than $n$ that has $n$ roots. So, $f-g$ is the zero polynomial, meaning $f=g$.
\end{remark}

A set of {\it indicator functions} for $\{a_1,\ldots, a_n \} \subseteq \Fq$ is a set $\{L_1,\ldots,L_n\}$ of $n$ polynomials in $\Fq[x]$ such that
\begin{equation*}
L_{i}(a_j)=\left\lbrace\begin{array}{cc}
     1&\text{if } i=j \\
     0&\text{if } i\neq j.
\end{array}\right.
\end{equation*}
These functions are well known in the theory of Lagrange interpolation~\cite{Gathen_Gerhard_2013}. In~\cite{LSV}, the authors used them to describe the dual of any evaluation code. There is a trivial way to construct a set of indicator functions where the degree of each function is less than $n$ (these are called {\it standard indicator functions}~\cite{LSV}). Indeed, given $\lbrace a_{1},\ldots,a_{n}\rbrace \subseteq \Fq$, define
$$\displaystyle L_i(x):=\frac{\prod_{j\neq i}(x-a_{j})}{\prod_{j\neq i}(a_{i}-a_{j})}.$$
Note that $L_i(x) \in \Fq[x]_{<n}$ for $i=1,\ldots,n$.

\section{Affine permutations}\label{affperm}
In this section, we prove that when $n<q$, the symmetric group $S_n$ can be viewed as a subset of $\Fq[x]_{<n}$. Then, we define affine permutations, which are associated with polynomials of degree $1$. The concept of affine permutations is essential to describe the permutation group of Reed-Solomon codes. From now on, we will assume that $n>1$.

We say that a polynomial $p$ in $\Fq[x]$ {\it permutes} $A$ if the function
\[\{a_1,\ldots,a_n\} \to \{a_1,\ldots,a_n\}, \qquad a \mapsto p(a),\] is a bijection.

The polynomial ring $\Fq[x]$ is not a group under the composition because not every element has an inverse. As a consequence of the standard indicator functions, we get a group when we restrict to the set
\[\left\{p \in \F_{q}[x]_{<n} : p \text{ permutes } A\right\}.\]
We must be careful with the operation because the composition of two polynomials of degree less than $n$ may have a degree larger than $n$.

Take $p_1, p_2 \in \left\{p \in \F_{q}[x]_{<n} : p \text{ permutes } A\right\}$. By the division algorithm, there exist $q(x)$ and $r(x)$ in $\F_{q}[x]$ such that for the usual composition of polynomials, we have
\[(p_1 \circ p_2)(x) = q(x)\prod_{i=1}^n(x-a_i) + r(x),\]
where $\deg(r) < n$. We define the composition of $p_1$ with $p_2$ modulo $A$ as
\[p_1 \underset{A}{\circ} p_2 := r.\]

Since $p_1$ and $p_2$ are permutations of $A$, their composition $p_1 \circ p_2$ also permutes $A$. Therefore, the composition $p_1 \underset{A}{\circ} p_2$ permutes $A$ as well, because 
$\left(p_1 \underset{A}{\circ} p_2\right)(a_i) = (p_1 \circ p_2)(a_i)$ for all $i=1,\ldots,n$. Moreover, since $\deg(p_1 \underset{A}{\circ} p_2) = \deg(r) < n$, it follows that $p_1 \underset{A}{\circ} p_2 \in \left\{p \in \F_{q}[x]_{<n} : p \text{ permutes } A\right\}$.

The following result follows directly from the previous lines.
\begin{lemma}
The set
\[\left\{p \in \F_{q}[x]_{<n} : p \text{ permutes } A\right\}\]
is a group under the composition modulo $A$.
\end{lemma}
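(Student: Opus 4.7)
The paragraph preceding the statement already verifies closure, so the plan is to establish the three remaining group axioms: associativity, identity, and inverses. The central tool throughout is Remark~\ref{24.01.10}: any polynomial of degree less than $n$ is completely determined by its values on $A$. This lets me reduce every polynomial identity modulo $A$ to a pointwise check on $\{a_1,\ldots,a_n\}$, which is where all the ``hard'' work actually happens — and on that set each $p$ just acts as a bona fide permutation, so the relevant facts become statements inside $S_n$.

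\textbf{Associativity.} Given $p_1,p_2,p_3$ in the set, both $(p_1 \underset{A}{\circ} p_2)\underset{A}{\circ} p_3$ and $p_1\underset{A}{\circ}(p_2\underset{A}{\circ} p_3)$ are polynomials of degree less than $n$. By construction of $\underset{A}{\circ}$, taking the remainder modulo $\prod_{i=1}^n (x-a_i)$ does not change the value at any $a_i$. Hence both expressions, evaluated at any $a_i$, equal $p_1(p_2(p_3(a_i)))$. By Remark~\ref{24.01.10}, they coincide as polynomials.

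\textbf{Identity.} Since $n>1$, the polynomial $e(x):=x$ lies in $\F_q[x]_{<n}$ and permutes $A$ trivially. For any $p$ in the set, the polynomials $p \underset{A}{\circ} e$ and $e \underset{A}{\circ} p$ both have degree less than $n$ and agree with $p$ on each $a_i$, so again by Remark~\ref{24.01.10} they equal $p$.

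\textbf{Inverses.} This is the only slightly substantive step. Let $p$ be in the set, and let $\sigma \in S_n$ be the permutation induced by $p$ on $A$, that is, $p(a_i) = a_{\sigma(i)}$. Using the standard indicator functions $L_1,\ldots,L_n$ recalled in Section~\ref{preli}, define
\[
q(x) := \sum_{i=1}^n a_{\sigma^{-1}(i)}\, L_i(x),
\]
which is an element of $\F_q[x]_{<n}$ satisfying $q(a_i) = a_{\sigma^{-1}(i)}$ for each $i$. Then on every $a_i$ one has $(p\underset{A}{\circ} q)(a_i) = p(a_{\sigma^{-1}(i)}) = a_i$ and likewise $(q\underset{A}{\circ} p)(a_i) = q(a_{\sigma(i)}) = a_i$. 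Since both $p\underset{A}{\circ} q$ and $q\underset{A}{\circ} p$ lie in $\F_q[x]_{<n}$ and agree with $e(x)=x$ on $A$, Remark~\ref{24.01.10} forces $p\underset{A}{\circ} q = q\underset{A}{\circ} p = e$. In particular $q$ permutes $A$ (its restriction to $A$ is the bijection $\sigma^{-1}$) and $q$ lies in the set, supplying the required two-sided inverse.

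The main obstacle, such as it is, is the existence of inverses, and it is handled cleanly by Lagrange interpolation via the standard indicator functions already introduced. Everything else is a mechanical application of the uniqueness statement in Remark~\ref{24.01.10}.
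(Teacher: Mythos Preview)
Your proof is correct and follows essentially the same approach as the paper: associativity is checked pointwise on $A$ and then upgraded to a polynomial identity via Remark~\ref{24.01.10}, the identity is $e(x)=x$, and the inverse of $p$ is obtained by interpolating the values $a_{\sigma^{-1}(i)}$ with the standard indicator functions $L_i$. The organization and justifications match the paper's argument almost verbatim.
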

\rmv{
\begin{proof}
Define $S:=\left\{p \in \F_{q}[x]_{<n} : p \text{ permutes } A\right\}$. The identity polynomial $p(x)=x$ is in $S$. For $p_1, p_2, p_3 \in S$,
\begin{align*}
\left(\left(p_1 \underset{A}{\circ} p_2\right) \underset{A}{\circ} p_3\right) (a_i)
& = \left(p_1 \circ p_2 \circ p_3\right) (a_i)\\
& = \left(p_1 \underset{A}{\circ} \left( p_2 \underset{A}{\circ} p_3 \right) \right) (a_i)
\end{align*}
for $i=1,\ldots,n$. Thus, the associativity 
$\left(p_1 \underset{A}{\circ} p_2\right) \underset{A}{\circ} p_3
 = p_1 \underset{A}{\circ} \left( p_2 \underset{A}{\circ} p_3 \right)$ holds by Remark~\ref{24.01.10}.

We now check that any element in $S$ has an inverse. As $p$ permutes $A$, $\{a_1,\ldots,a_n\} = \{p(a_1),\ldots,p(a_n) \}$. So, there is a permutation $\pi$ of the set $\{1,\ldots,n \}$ such that $p(a_i) = a_{\pi(i)}$ for $i=1,\ldots,n$. Let $\pi^{-1}$ be the inverse permutation of $\pi$. Let $\{L_1, \ldots, L_n \}$ be the set of standard indicator functions of $\{a_1,\ldots, a_n\}$ and $p$ an element in $S$. Define the polynomials
\[p_{\pi} := \sum_{i=1}^{n}a_{\pi(i)}L_{i} \quad
\text{ and } \quad
p_{\pi^{-1}} := \sum_{i=1}^{n}a_{\pi^{-1}(i)}L_{i}.\]
As $p_{\pi}(a_i) = a_{\pi(i)} = p(a_i)$, we get that $p_{\pi} = p$ by Remark~\ref{24.01.10}. We have that $p_{\pi^{-1}}\in S$ and
\begin{align*}
(p_{\pi} \underset{A}{\circ} p_{\pi^{-1}})(a_i)
= (p_{\pi} \circ p_{\pi^{-1}})(a_i)
=a_{(\pi \circ \pi^{-1})(i)}
= a_i.
\end{align*}
Analogously, $(p_{\pi^{-1}} \circ p_{\pi})(a_i) = a_i$, from which we obtain that $p_{\pi^{-1}}$ is the inverse of $p$.
\end{proof}}

The following result shows that every element $\pi \in S_n$ can be interpreted as a polynomial.
\begin{theorem}\label{25.01.15}
We have that
\begin{align*}S_n &\cong \left\{p \in \F_{q}[x]_{<n} : p \text{ permutes } A\right\} \\ \pi &\mapsto p_\pi\end{align*}
where each set is equipped with its respective group operations. Even more, we have $\pi(A) = p_\pi(A)$.
\end{theorem}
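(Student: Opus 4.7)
The plan is to exhibit an explicit isomorphism via Lagrange interpolation. For each $\pi \in S_n$, define
$$p_\pi := \sum_{i=1}^{n} a_{\pi(i)} L_i,$$
where $\{L_1,\ldots,L_n\}$ is the set of standard indicator functions for $\{a_1,\ldots,a_n\}$. Since each $L_i \in \F_q[x]_{<n}$, we have $p_\pi \in \F_q[x]_{<n}$, and the defining property of the indicator functions gives $p_\pi(a_i) = a_{\pi(i)}$ for every $i$. Because $\pi$ is a bijection on $\{1,\ldots,n\}$, the tuple $(a_{\pi(1)},\ldots,a_{\pi(n)})$ is a reordering of $(a_1,\ldots,a_n)$, so $p_\pi$ permutes $A$; this also yields $\pi(A) = p_\pi(A)$ coordinatewise, which is the ``Even more'' part of the claim.

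Next I would verify bijectivity. For injectivity, if $p_\pi = p_\sigma$ then evaluating at each $a_i$ gives $a_{\pi(i)} = a_{\sigma(i)}$, and since the entries of $A$ are distinct, $\pi = \sigma$. For surjectivity, given $p \in \F_q[x]_{<n}$ that permutes $A$, define $\pi \in S_n$ by $p(a_i) = a_{\pi(i)}$; then $p$ and $p_\pi$ are two polynomials in $\F_q[x]_{<n}$ agreeing on the $n$ distinct points of $A$, so $p = p_\pi$ by Remark~\ref{24.01.10}.

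For the homomorphism property, I must check $p_{\pi \circ \sigma} = p_\pi \underset{A}{\circ} p_\sigma$. Both sides live in $\F_q[x]_{<n}$, so by Remark~\ref{24.01.10} it suffices to confirm agreement on $A$, which is the one-line computation
$$\bigl(p_\pi \underset{A}{\circ} p_\sigma\bigr)(a_i) = p_\pi\bigl(p_\sigma(a_i)\bigr) = p_\pi(a_{\sigma(i)}) = a_{\pi(\sigma(i))} = p_{\pi \circ \sigma}(a_i).$$

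I do not expect a genuine obstacle; the argument is essentially a dictionary between $S_n$ and interpolating polynomials, underwritten by the uniqueness statement in Remark~\ref{24.01.10}. The only point that needs mild care is that composition on the polynomial side must be taken modulo $A$ rather than as ordinary polynomial composition, since $p_\pi \circ p_\sigma$ may have degree $\ge n$; however, reducing modulo $\prod_{i=1}^{n}(x-a_i)$ preserves values on $A$, so the same evaluation argument goes through unchanged.
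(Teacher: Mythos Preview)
Your proposal is correct and follows essentially the same approach as the paper: define $p_\pi$ via Lagrange interpolation with the standard indicator functions, check bijectivity by constructing the inverse map $p\mapsto\pi_p$ (the paper phrases this as the two maps being mutual inverses rather than checking injectivity and surjectivity separately, but this is cosmetic), and verify the homomorphism property by evaluating both sides on $A$ and invoking Remark~\ref{24.01.10}. Your added remark about the need for composition modulo $A$ is accurate and matches the paper's setup.
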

\begin{proof}

Let $\pi$ be an element of $S_n$ and $\{L_1, \ldots, L_n \}$ the set of standard indicator functions of $\{a_1,\ldots, a_n\}$. Define the polynomial
$p_{\pi} := \sum_{i=1}^{n}a_{\pi(i)}L_{i},$
which has degree less than $n$ and satisfies $p_{\pi}(a_i)= a_{\pi(i)}$. Observe that $p_\pi$ permutes $A$. If $p \in \Fq[x]$ permutes $A$, $\{a_1,\ldots,a_n\} = \{p(a_1),\ldots,p(a_n) \}$. So, there is a permutation $\pi_p$ of the set $\{1,\ldots,n \}$ such that $p(a_i) = a_{\pi_p(i)}$ for $i=1,\ldots,n$. The compositions of the two maps
$\pi \mapsto p_\pi$ and $p \mapsto \pi_p$ are the identity, so we have a set-theoretic bijection between the two sets.

Take $\pi_1$ and $\pi_2$ in $S_n$. Let $p_{\pi_1 \circ \pi_2}$, $p_{\pi_1}$, and $p_{\pi_2}$ be the associated polynomials to $\pi_1 \circ \pi_2$, $\pi_1$, and $\pi_2$, respectively by the bijection above. For $i=1,\ldots,n$, we have
\begin{align*}
\left(p_{\pi_1} \underset{A}{\circ} p_{\pi_2}\right)(a_i)
 = (p_{\pi_1} \circ p_{\pi_2})(a_i)
 = p_{\pi_1 \circ \pi_2}(a_i),
\end{align*}
which means $p_{\pi_1} \underset{A}{\circ} p_{\pi_2} = p_{\pi_1 \circ \pi_2}$ by Remark~\ref{24.01.10}.

Finally, we have that
\begin{eqnarray*}
\pi(A) &=& \left(a_{\pi(1)}, \dots, a_{\pi(n)} \right)\\
&=& (p_\pi(a_1),\ldots, p_\pi(a_n)) = p_{\pi}(A).
\end{eqnarray*}
This completes the proof.
\end{proof}

\begin{remark}
    The previous proof is a particular case of the fact that any function $f$ in one variable over $\mathbb{F}_q$ corresponds to a polynomial $p$ in the sense that $f(a)=p(a)$ for any $a\in\mathbb{F}_q$.
\end{remark}

\begin{remark}\label{24.01.12}
Note that the image of the map in Theorem~\ref{25.01.15} lies in 
$\left\{p \in \F_{q}[x]_{\geq 1} : p \text{ permutes } A\right\}.$
This is because any constant polynomial does not permute $A$.
\end{remark}

The set of polynomials of a fixed degree $k$ is denoted by $\F_{q}[x]_{=k}$. We are now ready to define the affine permutations of $A$.
\begin{definition}
An {\it affine permutation of $A$} is any permutation $\pi$ such that $p_{\pi}$ (from Theorem~\ref{25.01.15}) is in the set
\[\left\{p \in \F_{q}[x]_{=1} : p \text{ permutes } A\right\}.\]
By the isomorphism of Theorem~\ref{25.01.15}, any polynomial from the previous set is also known as an affine permutation of $A$.
\end{definition}

\section{Permutation group of Reed-Solomon codes}\label{perRS}
In this section, we prove that the permutation group of a Reed-Solomon code is usually given by the affine permutations of the evaluation set. To be more precise, we give an elementary proof of the following facts.
\begin{itemize}
\item For $k=1$ and $k=n$,
\[\Per(\RS(A,k)) = S_n \cong \left\{p \in \F_{q}[x]_{<n} : p \text{ permutes } A\right\}.\]
\item For $k=2,\ldots,n-2$,
\[\Per(\RS(A,k))\cong \left\{p \in \F_{q}[x]_{=1} : p \text{ permutes } A\right\}.\]
\item For $k=n-1$,
\[ \Per(\RS(A,k)) \supset \left\{p \in \F_{q}[x]_{=1} : p \text{ permutes } A\right\}.\] The equality depends on $A$.
\end{itemize}
As a consequence, we obtain the following classical results
\begin{itemize}
\item For $k=2,\ldots,n-2$,
\[\Per(\RS(\F_q,k))\cong \F_{q}[x]_{=1}.\]
\item For $k=2,\ldots,n-2$,
\[\Per(\RS(\F_q^*,k))\cong \left\{p \in \F_{q}[x]_{=1} : p(0) = 0 \right\}.\]
\end{itemize}
\rmv{every affine permutation is an element of the permutation group of a Reed-Solomon code. Then we show that if $k \leq n/2$, then every element of the permutation group of a Reed-Solomon code $\RS(A,k)$ is an affine permutation. We prove that for any $1 < k < n$, every element of the permutation group $\Per(\RS(\F_q,k))$ is an affine transformation of the form $T(x)=ax+b$ and the automorphism group of the Reed-Solomon code $\RS(\F_q^*,k)$ is given by all linear transformations of the form $T(x)=ax$.}

Let $f$ be an element in $\F_{q}[x]$ and $\pi$ in $S_n$. In Section~\ref{preli}, we saw that $f$ and $\pi$ define maps from $\F_q^n$ to $\F_q^n$. The following result shows that these maps commute.
\begin{lemma}\label{24.01.16}
Let $f$ be an element in $\F_{q}[x]$ and $\pi$ in $S_n$. Using the maps defined in Section~\ref{preli}, the following diagram commutes.
\[\begin{tikzcd}
\F_{q}^n \arrow{r}{f} \arrow[swap]{d}{\pi} & \F_{q}^n \arrow{d}{\pi} \\
\F_{q}^n \arrow{r}{f} & \F_{q}^n
\end{tikzcd}
\]
\end{lemma}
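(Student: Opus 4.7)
The plan is to verify commutativity of the diagram by chasing a generic element $A = (a_1, \ldots, a_n) \in \F_q^n$ around both paths and observing the results coincide coordinate-wise. Since both maps are defined explicitly and pointwise, no structural machinery is needed --- only the definitions from Section~\ref{preli}.

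First I would compute the upper-right path: applying $f$ first sends $A = (a_1, \ldots, a_n)$ to $f(A) = (f(a_1), \ldots, f(a_n))$, and then applying $\pi$ permutes its entries to give
\[
\pi(f(A)) = \bigl(f(a_{\pi(1)}), \ldots, f(a_{\pi(n)})\bigr).
\]
Next I would compute the lower-left path: applying $\pi$ first sends $A$ to $\pi(A) = (a_{\pi(1)}, \ldots, a_{\pi(n)})$, and then applying $f$ evaluates coordinate-wise to give
\[
f(\pi(A)) = \bigl(f(a_{\pi(1)}), \ldots, f(a_{\pi(n)})\bigr).
\]
Comparing the two tuples yields $\pi(f(A)) = f(\pi(A))$, which is precisely commutativity of the diagram.

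There is no real obstacle here --- the lemma is essentially a bookkeeping statement that polynomial evaluation (which acts coordinate-wise and depends only on the entry at each position) commutes with any reindexing of coordinates. The only subtlety to flag in the writeup is to be careful to distinguish the two uses of the symbols $f$ and $\pi$ (as a polynomial/permutation versus as the induced map on $\F_q^n$), but once the definitions are unfolded, both compositions produce the same tuple entry by entry.
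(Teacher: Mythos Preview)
Your proposal is correct and follows essentially the same approach as the paper: both chase a generic element $A=(a_1,\ldots,a_n)$ around the diagram, expand the definitions of the maps $f$ and $\pi$, and observe that each path yields the tuple $(f(a_{\pi(1)}),\ldots,f(a_{\pi(n)}))$.
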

\begin{proof}
For $A=(a_1,\ldots, a_n) \in \F_{q}^n$, we have that
\begin{eqnarray*}
(\pi \circ f)(A) 
&=&\pi (f(A))\\
&=& \pi\left(f(a_1), \dots, f(a_n) \right)\\
&=&\left(f(a_{\pi(1)}), \dots, f(a_{\pi(n)}) \right)\\
&=& f(\pi (A))\\
&=& (f\circ\pi)(A),
\end{eqnarray*}    
which completes the proof.
\end{proof}

We now show that the permutation group of any Reed-Solomon code $\RS(A,k)$ always contains all the affine permutations of $A$.
\begin{proposition}\label{24.01.13}
If $\pi \in S_n$ is an affine permutation of $A$, then $\pi \in \Per\left( \RS(A,k) \right)$.
\end{proposition}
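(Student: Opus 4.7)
The plan is to take an arbitrary codeword $c = f(A) \in \RS(A,k)$, where $f \in \F_q[x]_{<k}$, and show directly that $\pi(c)$ is again the evaluation at $A$ of some polynomial of degree less than $k$. Since $\pi \in S_n$ acts as a linear bijection on $\F_q^n$, establishing the containment $\pi(\RS(A,k)) \subseteq \RS(A,k)$ will suffice; equality then follows from the fact that the two subspaces have the same $\F_q$-dimension.

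First, I would apply Lemma~\ref{24.01.16} to the pair $(f,\pi)$, which gives
\[\pi(c) \;=\; \pi(f(A)) \;=\; f(\pi(A)).\]
Next, since $\pi$ is an affine permutation of $A$, Theorem~\ref{25.01.15} associates to $\pi$ a polynomial $p_\pi \in \F_q[x]_{=1}$ with $\pi(A) = p_\pi(A)$. Substituting, we obtain
\[\pi(c) \;=\; f(p_\pi(A)) \;=\; (f \circ p_\pi)(A),\]
where $f \circ p_\pi$ denotes the usual composition of polynomials in $\F_q[x]$ (no reduction modulo $\prod_i (x-a_i)$ is required here because we only evaluate at points of $A$).

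The key observation is that $p_\pi$ has degree exactly $1$, so $\deg(f \circ p_\pi) = \deg(f) < k$. Hence $f \circ p_\pi \in \F_q[x]_{<k}$, and therefore $(f \circ p_\pi)(A) \in \RS(A,k)$. This proves $\pi(\RS(A,k)) \subseteq \RS(A,k)$, and the reverse inclusion follows from a dimension count, or equivalently by applying the same argument to $\pi^{-1}$, which is again an affine permutation of $A$ because $p_{\pi^{-1}} = p_\pi^{-1}$ is again of degree $1$.

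I do not anticipate a genuine obstacle: the entire argument rests on the fact that composing a polynomial of degree $<k$ with one of degree $1$ preserves the degree bound, together with the commutativity square of Lemma~\ref{24.01.16}. The only point to state carefully is the distinction between ordinary composition and composition modulo $A$; here ordinary composition is what appears naturally, and the bound $\deg(f \circ p_\pi) < k \leq n$ even makes the reduction step vacuous, so the two notions agree.
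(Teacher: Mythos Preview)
Your proof is correct and follows essentially the same route as the paper: apply Lemma~\ref{24.01.16} to rewrite $\pi(f(A))$ as $f(\pi(A)) = (f\circ p_\pi)(A)$, then use $\deg(p_\pi)=1$ to conclude $\deg(f\circ p_\pi)<k$. Your added remarks about the dimension count and the distinction between ordinary composition and composition modulo $A$ are sound but not present in the paper's version, which leaves the equality $\pi(\RS(A,k))=\RS(A,k)$ implicit.
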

\begin{proof}
Let $p_\pi$ be the image of $\pi$ under the isomorphism of Theorem~\ref{25.01.15}.

Let $f(A) = (f(a_1),\ldots,f(a_n))$ be an arbitrary element in $\RS(A,k)$, where $f \in \Fq[x]_{<k}$. We have
\begin{align*}
\pi(f(a_1),\ldots,f(a_n))
&= (\pi \circ f)(A)\\
&= (f \circ \pi)(A) \\
&= (f \circ p_\pi)(A),
\end{align*}
where the second equality holds by Lemma~\ref{24.01.16}, and the third equality holds by the fact that $\pi(A) = p_\pi(A)$. Moreover, $\deg(p_\pi) = 1$ since $\pi$ is an affine permutation of $A$. Thus, $\deg (f \circ p_\pi) = \deg (f) < k$, which means that $(f \circ p_\pi)(A)$ is in fact an element of $\RS(A,k)$. Thus, $\pi \in \Per\left( \RS(A,k) \right)$.
\end{proof}
In the following result, by $p^i$, we mean the power of a polynomial with respect to the usual product.
\begin{proposition}\label{24.05.03}
Assume $0 < k < n$ and define $k^\prime:= n-k$. Let $\pi$ be an element in $\Per(\RS(A,k))$. The polynomial $p_\pi$ associated with $\pi$ under the isomorphism of Theorem~\ref{25.01.15} satisfies the following.
\begin{itemize}
\item [(i)] For $i<k$, there exists $ f_i \in \Fq[x]_{<k}$ such that
\[p_{\pi}^i(A) = f_i(A).\]
\item [(ii)] For $i<k^\prime$, there exists $g_i \in \Fq[x]_{<k^\prime}$ such that 
\[((g\circ p_\pi) \cdot p_\pi^i)(A)= (g \cdot g_i)(A),\] 
for some $g(x) \in \mathbb{F}_q[x]_{<n}$ such that $g(A)$ has no zero entries.
\end{itemize}
\end{proposition}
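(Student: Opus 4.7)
The proof strategy rests on one commutation identity: for every $f \in \Fq[x]$ and every $\pi \in S_n$,
\[ \pi(f(A)) = (f \circ p_\pi)(A). \]
This follows by combining $\pi \circ f = f \circ \pi$ from Lemma~\ref{24.01.16} with $\pi(A) = p_\pi(A)$ from Theorem~\ref{25.01.15}, since $f(\pi(A)) = f(p_\pi(A)) = (f \circ p_\pi)(A)$ as vectors in $\Fq^n$. Both parts are then obtained by feeding appropriate codewords into this identity.

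For part~(i), for each $i < k$ the monomial $x^i$ is in $\Fq[x]_{<k}$, so $x^i(A)$ is a codeword of $\RS(A,k)$. The identity above gives $\pi(x^i(A)) = (x^i \circ p_\pi)(A) = p_\pi^i(A)$, and since $\pi$ preserves $\RS(A,k)$, this vector is again a codeword, so it equals $f_i(A)$ for some $f_i \in \Fq[x]_{<k}$, as required.

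For part~(ii), I would first pass to the dual via Lemma~\ref{23.03.19}, so that $\pi \in \Per(\RS(A,k)^\perp)$. The classical GRS characterization of the dual provides nonzero scalars $v_i := 1/\prod_{j \neq i}(a_i - a_j)$ with
\[ \RS(A,k)^\perp = \{(v_1 h(a_1), \ldots, v_n h(a_n)) : h \in \Fq[x]_{<k^\prime}\}. \]
Setting $g := \sum_{i=1}^n v_i L_i \in \Fq[x]_{<n}$, where the $L_i$ are the standard indicator functions of $\{a_1, \ldots, a_n\}$, yields $g(a_i) = v_i$ for every $i$, so $g(A)$ has no zero entries and the dual rewrites as $\{(g \cdot h)(A) : h \in \Fq[x]_{<k^\prime}\}$. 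For each $i < k^\prime$, I then apply $\pi$ to the codeword $(g \cdot x^i)(A) \in \RS(A,k)^\perp$: the commutation identity together with the pointwise identity $(f \cdot h) \circ p_\pi = (f \circ p_\pi) \cdot (h \circ p_\pi)$ gives $\pi((g \cdot x^i)(A)) = ((g \circ p_\pi) \cdot p_\pi^i)(A)$, and membership in the dual forces this vector to equal $(g \cdot g_i)(A)$ for some $g_i \in \Fq[x]_{<k^\prime}$.

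The only substantive ingredient beyond the commutation identity is the GRS description of $\RS(A,k)^\perp$, which is standard; the main conceptual point is choosing the specific polynomial $g$ that realizes the dual structure, since both the factor $g$ and the degree bound on $g_i$ in the statement are reflecting exactly this GRS parametrization. Once $g$ is pinned down this way, part~(ii) reduces, as in part~(i), to evaluating $\pi$ on the monomials $g \cdot x^i$ and observing that the result must again lie in the dual code.
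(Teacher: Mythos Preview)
Your proposal is correct and follows essentially the same route as the paper's proof: both parts reduce to applying the commutation identity $\pi(f(A)) = (f\circ p_\pi)(A)$ to the codewords $x^i(A)$ in $\RS(A,k)$ and $(g\cdot x^i)(A)$ in $\RS(A,k)^\perp$, respectively, and then invoking $\pi\in\Per(\RS(A,k))=\Per(\RS(A,k)^\perp)$. The only cosmetic difference is that the paper simply asserts the existence of a polynomial $g$ realizing the dual as $\{(g\cdot h)(A):\deg h<k'\}$, whereas you explicitly construct it via $g=\sum_i v_i L_i$ with $v_i=1/\prod_{j\neq i}(a_i-a_j)$; this adds precision but does not change the argument.
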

\begin{proof}
Let $\pi$ be an element $\Per\left( \RS(A,k) \right)$ and $p_{\pi}$ their corresponding polynomial under the isomorphism of Theorem~\ref{25.01.15}.

(i) Take $0 \leq i < k$. We have
\begin{align*}
p_{\pi}^i(A)
&= (x^i \circ p_{\pi})(A)\\
&= (x^i \circ \pi)(A)\\
&= (\pi \circ x^i)(A)\\
&= \pi \left(x^i(A)\right).
\end{align*}
As $i<k$, the element $x^i(A) = \left(x^i(a_1),\ldots,x^i(a_n) \right)$ belongs to $\RS(A,k)$. We also have that $\pi \in \Per\left( \RS(A,k) \right)$, so $ p_{\pi}^i(A) = \pi \left( x^i(A)\right) \in \RS(A,k)$. Then, there is an element $f_i \in \Fq[x]_{<k}$ such that
\[
p_{\pi}^i(A) = f_i(A).
\]

(ii) There exists a polynomial $g(x) \in \mathbb{F}_q[x]_{<n}$ such that $g(A)$ has no zero entries and the dual of the Reed-Solomon code $\RS(A,k)$ is given by
\begin{align*}
    g(A) \star \RS(A,k^\prime) & = \{g(A) \star f(A) : \deg(f)< k^\prime\}\\
    & = \{(g \cdot f)(A) : \deg(f)< k^\prime\},
\end{align*}
where $\star$ represents the component-wise product and $k^\prime = n-k \geq 1$. Take $0 \leq i < k^\prime$. As $\pi$ is an element in $\Per\left( \RS(A,k) \right)$, by Lemma~\ref{23.03.19}, $\pi$ belongs also to $\Per\left( g(A)\star\RS(A,k^\prime) \right)$. Then, we obtain
\begin{align*}
((g\circ p_\pi) \cdot p_\pi^i)(A)
&= ((g \circ p_\pi) \cdot (x^i \circ p_\pi))(A)\\
&= ((g \cdot x^i) \circ p_{\pi})(A)\\
&= ((g \cdot x^i) \circ \pi)(A)\\
&= (\pi \circ (g \cdot x^i))(A)\\
&= \pi \left( \left( g\cdot x^i\right) (A) \right).
\end{align*}
As $i < k^\prime$, $\left( g\cdot x^i\right) (A) \in g(A) \star \RS(A,k^\prime)$. We also have that $\pi \in \Per\left( g(A) \star \RS(A,k^\prime) \right)$, so
\[((g\circ p_\pi) \cdot p_\pi^i)(A) = \pi \left( \left( g\cdot x^i\right) (A) \right) \in g(A) \star \RS(A,k^\prime).\]
Thus, there is an element $g_i \in \Fq[x]_{<k^\prime}$ such that
\[
((g\circ p_\pi) \cdot p_\pi^i)(A)=(g \cdot g_i)(A), \qquad \text{ for } i < k^\prime,
\]
which completes the proof.
\end{proof}
The following result bounds the degree of polynomials associated with the permutations of Reed-Solomon codes.
\begin{theorem}\label{24.01.17}
Assume $1 < k < n-1$. Let $\pi$ be an element in $\Per(\RS(A,k))$. For the associated polynomial $p_\pi$ under the isomorphism of Theorem~\ref{25.01.15}, we have that
\[\deg(p_\pi)<\min \{k,n-k\}.\]
\end{theorem}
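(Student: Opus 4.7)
The plan is to establish the two bounds $\deg(p_\pi) < k$ and $\deg(p_\pi) < n-k$ separately, from parts (i) and (ii) of Proposition~\ref{24.05.03}, respectively, and then conclude by taking the minimum.

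For the first bound, I apply part (i) with $i=1$ to obtain $f_1 \in \F_q[x]_{<k}$ with $p_\pi(A) = f_1(A)$. Since both $p_\pi$ and $f_1$ lie in $\F_q[x]_{<n}$, Remark~\ref{24.01.10} forces $p_\pi = f_1$ as polynomials, so $\deg(p_\pi) < k$ immediately.

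For the second bound, set $k' := n-k$ and apply part (ii) with $i=0$ and $i=1$. This yields polynomials $g_0, g_1 \in \F_q[x]_{<k'}$ and $g \in \F_q[x]_{<n}$ with $g(A)$ having no zero entries, such that
\[
(g \circ p_\pi)(A) = (g \cdot g_0)(A), \qquad ((g \circ p_\pi) \cdot p_\pi)(A) = (g \cdot g_1)(A).
\]
Multiplying the first identity entry-wise by $p_\pi(A)$ and comparing with the second, the common factor $g(A)$ can be cancelled (its entries are all nonzero), leaving $(g_0 \cdot p_\pi - g_1)(A) = 0$. A key observation is that $g_0$ is not the zero polynomial: since $p_\pi$ permutes $A$, each entry $g(p_\pi(a_j))$ equals some $g(a_{\pi(j)})$, a nonzero entry of $g(A)$, so $g_0(a_j) = g(p_\pi(a_j))/g(a_j) \neq 0$ for every $j$.

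Finally, I invoke the already-established bound $\deg(p_\pi) < k$ for degree bookkeeping: together with $\deg(g_0) < k'$, the polynomial $g_0 \cdot p_\pi - g_1$ has degree strictly less than $k + k' = n$. By Remark~\ref{24.01.10}, its vanishing on $A$ promotes to the polynomial identity $g_0 \cdot p_\pi = g_1$, and comparing degrees gives $\deg(p_\pi) \leq \deg(g_0) + \deg(p_\pi) = \deg(g_1) < k'$, as desired. The main subtlety I expect is that both $i=0$ and $i=1$ of part (ii) must be combined in order to isolate $p_\pi$, and the first bound $\deg(p_\pi) < k$ must be used as an input so that the degree of $g_0 \cdot p_\pi - g_1$ stays below $n$, allowing Remark~\ref{24.01.10} to be applied.
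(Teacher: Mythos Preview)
Your proof is correct and follows essentially the same approach as the paper: first deduce $\deg(p_\pi)<k$ from Proposition~\ref{24.05.03}(i) with $i=1$ via Remark~\ref{24.01.10}, then combine the $i=0$ and $i=1$ cases of part (ii) to cancel $g$ and obtain $(g_0\cdot p_\pi)(A)=g_1(A)$, use the first bound to keep the degree below $n$, and apply Remark~\ref{24.01.10} again to get $g_0\cdot p_\pi=g_1$ and hence $\deg(p_\pi)<k'$. Your explicit verification that $g_0\neq 0$ (via $g_0(a_j)=g(p_\pi(a_j))/g(a_j)\neq 0$) is a welcome addition; the paper's proof uses this fact implicitly in the last step but does not state it.
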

\begin{proof}
Define $k^\prime:= n-k$. By the definition of $p_\pi$ in Theorem~\ref{25.01.15}, $\deg(p_{\pi}) < n$. By Proposition~\ref{24.05.03} (i), $p_{\pi}(A) = f_1(A)$, with $\deg(f_1) < k$. Thus, by Remark~\ref{24.01.10}, $p_{\pi}=f_1$ and
\begin{eqnarray}\label{24.05.04}
\deg(p_{\pi}) = \deg(f_1) < k.
\end{eqnarray}
By Proposition~\ref{24.05.03} (ii), the values $i=0$ and $i=1$ imply
$$(g\circ p_\pi)(A)=(g \cdot g_0)(A)$$
and $$\left((g\circ p_\pi)\cdot p_\pi\right)(A)=(g \cdot g_1)(A).$$
The previous expressions give rise to the equations
\[(g \cdot g_0 \cdot p_\pi)(A)
= \left((g\circ p_\pi)\cdot p_\pi\right)(A)
= (g \cdot g_1)(A),\]
which imply that $g(a_j) (g_0 \cdot p_\pi - g_1)(a_j) = 0$ for $j=1,\ldots, n$. As $g(a_j) \neq 0$ for $j=1,\ldots,n$, then
\[(g_0\cdot p_\pi)(A) = g_1(A).\]
Proposition~\ref{24.05.03} (ii) and Eq.~(\ref{24.05.04}) imply that $\deg(g_0) < k^\prime,$ $\deg(g_1)<k^\prime$, and $\deg(p_{\pi})<k$. Thus, $\deg (g_0\cdot p_\pi) < k^\prime + k = n$. Therefore, by Remark~\ref{24.01.10},
\[g_0 \cdot p_\pi = g_1 \in \mathbb{F}_q[x]_{<k^\prime}.\]
Thus $\deg(p_\pi) < k^\prime$, which completes the proof together with Eq.~(\ref{24.05.04}).
\end{proof}

\begin{proposition}\label{24.01.14}
Take $1 < k < n-1$. If $\pi \in \Per(\RS(A,k))$, then $\pi$ is an affine permutation of $A$.
\end{proposition}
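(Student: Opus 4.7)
The plan is to show that $d := \deg(p_\pi) = 1$, which by definition of affine permutation is exactly what is required. Remark~\ref{24.01.12} already guarantees $d \geq 1$, so the entire task is to rule out $d \geq 2$, and the two tools at hand are Theorem~\ref{24.01.17} (which gives $d < \min\{k, n-k\}$) and Proposition~\ref{24.05.03}(i) (which records $p_\pi^i(A) = f_i(A)$ for some $f_i$ of degree less than $k$, whenever $i < k$).

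First, the boundary cases $k = 2$ and $k = n-2$ are immediate: in both, $\min\{k, n-k\} = 2$, so Theorem~\ref{24.01.17} combined with $d \geq 1$ forces $d = 1$ directly. In the remaining range $3 \leq k \leq n-3$, the bound from Theorem~\ref{24.01.17} alone is too weak, and the idea is to raise $p_\pi$ to a carefully chosen power. Assuming $d \geq 2$ for contradiction, I would set $i := \lceil k/d \rceil$, the smallest positive integer with $id \geq k$. Because $d \geq 2$ and $k \geq 3$, one checks $i \leq \lceil k/2 \rceil < k$; because $d < n-k$, one checks $id < k + d < n$. Thus $i$ is in the range where Proposition~\ref{24.05.03}(i) applies, and the polynomial $p_\pi^i$ has degree $id < n$, so Remark~\ref{24.01.10} upgrades the evaluation identity $p_\pi^i(A) = f_i(A)$ to the polynomial identity $p_\pi^i = f_i$. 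But then $id = \deg(p_\pi^i) = \deg(f_i) < k$, contradicting $id \geq k$.

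The main obstacle, and really the whole content of the argument, is choosing the exponent $i$ so that the three constraints $i < k$ (needed to invoke Proposition~\ref{24.05.03}(i)), $id < n$ (needed to invoke Remark~\ref{24.01.10}), and $id \geq k$ (needed to generate the contradiction) hold simultaneously. The minimal choice $i = \lceil k/d \rceil$ threads this needle precisely when $k \geq 3$ and $n-k \geq 3$, which is exactly why the proof naturally splits into the boundary and interior cases above.
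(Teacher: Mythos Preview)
Your proof is correct. It uses the same three ingredients as the paper's argument (Theorem~\ref{24.01.17}, Proposition~\ref{24.05.03}(i), and Remark~\ref{24.01.10}) and the same core idea of examining powers $p_\pi^i$, but organizes them differently. The paper proceeds by induction: it shows successively that $p_\pi^2, p_\pi^3, \ldots, p_\pi^{k-1}$ all lie in $\Fq[x]_{<k}$ (each step using $\deg(p_\pi^i) \le \deg(p_\pi) + \deg(p_\pi^{i-1}) < k' + k \le n$ to invoke Remark~\ref{24.01.10}), and then reads off $(k-1)\deg(p_\pi) < k$ from the last one. You instead argue by contradiction, selecting the single exponent $i = \lceil k/d \rceil$ so that the three constraints $i<k$, $id<n$, and $id \ge k$ hold simultaneously; this bypasses the induction entirely. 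Your route is a bit slicker and makes the role of the bound $d < n-k$ more transparent, while the paper's induction has the mild advantage of not needing the case split into boundary ($k\in\{2,n-2\}$) and interior ($3 \le k \le n-3$) ranges.
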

\begin{proof}
Define $k^\prime:= \min\{k,n-k\}>1$. If $k \leq n/2$, then $k^\prime = k$. Otherwise, $k > n/2$, meaning $k^\prime = n-k < n-n/2=n/2$. In any case,
\[1 < k^\prime \leq n/2 \quad \text{ and } \quad k^\prime + k \leq n.\]
We will use these two facts in the rest of the proof.

Let $p_\pi$ be the image of $\pi$ under the isomorphism of Theorem~\ref{25.01.15}. By Proposition~\ref{24.05.03}~(i), there exist $f_i \in \Fq[x]_{<k}$ such that
\[p_{\pi}^i(A) = f_i(A)\]
for $i=1,\ldots,k-1$. 
\begin{itemize}
\item[-] By Theorem~\ref{24.01.17}, $\deg(p_{\pi})<k^\prime$.
\item[-] We have
\[
\deg(p_{\pi}^2) = \deg(p_{\pi}) + \deg(p_{\pi})
< k^\prime + k^\prime
\leq n.
\]
By Remark~\ref{24.01.10}, we get $p_{\pi}^2 = f_2 \in \Fq[x]_{<k}$.
\item[-] We have
\[
\deg(p_{\pi}^3) = \deg(p_{\pi}) + \deg(p^2_{\pi})
< k^\prime + k
\leq n.
\]
We get $p_{\pi}^3 = f_3 \in \Fq[x]_{<k}$ by Remark~\ref{24.01.10}.
\item[-] In general, using induction for $i=4,\ldots k-1$, we obtain
\[
\deg(p_{\pi}^i) = \deg(p_{\pi}) + \deg(p^{i-1}_{\pi})
< k^\prime + k
\leq n.
\]
Thus, By Remark~\ref{24.01.10}, we get $p_{\pi}^i = f_i \in \Fq[x]_{<k}$ for $i=1,\ldots k-1$.
\end{itemize}
Taking $i=k-1$, we obtain $p_{\pi}^{k-1} \in \Fq[x]_{<k}$, which means that $\deg(p_{\pi})<k/(k-1) < 2$. By Remark~\ref{24.01.12}, we get the result.
\end{proof}

The following result is an elementary proof that the permutation group of a Reed-Solomon code is given by affine permutations when $1 < k < n-1$.
\begin{theorem}\label{24.01.19}
For $1 < k < n-1$, we have 
\[\Per(\RS(A,k))\cong \left\{p \in \F_{q}[x]_{=1} : p \text{ permutes } A\right\}.\]
In other words, the permutation group of a Reed-Solomon code is given by the affine permutations of $A$.
\end{theorem}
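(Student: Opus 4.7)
The plan is to combine the two main propositions proved just before the statement, Proposition~\ref{24.01.13} and Proposition~\ref{24.01.14}, transported through the group isomorphism of Theorem~\ref{25.01.15}. That theorem gives a group isomorphism
\[S_n \;\cong\; \left\{p \in \F_{q}[x]_{<n} : p \text{ permutes } A\right\}, \qquad \pi \mapsto p_\pi,\]
so $\Per(\RS(A,k))$, which is by definition a subgroup of $S_n$, corresponds under this isomorphism to a subgroup $P$ of the polynomial group. The whole task reduces to identifying $P$ with $\left\{p \in \F_{q}[x]_{=1} : p \text{ permutes } A\right\}$.

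First I would verify that the latter set is actually a subgroup of the polynomial side, so that the statement makes sense as an isomorphism of groups: this is immediate because composing two degree-one polynomials gives a degree-one polynomial, and the composition already has degree less than $n$ (since $n>2$ in the range $1<k<n-1$), so no reduction modulo $A$ changes it; inverses of degree-one permutation polynomials are again degree-one. So the affine permutations of $A$ form a subgroup of $\left\{p \in \F_{q}[x]_{<n} : p \text{ permutes } A\right\}$.

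Next, the containment $\left\{p \in \F_{q}[x]_{=1} : p \text{ permutes } A\right\} \subseteq P$ is exactly the content of Proposition~\ref{24.01.13}: if $\pi$ is an affine permutation of $A$, then $\pi \in \Per(\RS(A,k))$, so its image $p_\pi$ (which by definition of affine permutation is a degree-one polynomial permuting $A$) lies in $P$. For the reverse containment, take $\pi \in \Per(\RS(A,k))$; Proposition~\ref{24.01.14} says exactly that $\pi$ is then an affine permutation of $A$, so $p_\pi$ has degree one and permutes $A$. Thus $P = \left\{p \in \F_{q}[x]_{=1} : p \text{ permutes } A\right\}$, and restricting the isomorphism of Theorem~\ref{25.01.15} to the subgroup $\Per(\RS(A,k))$ yields the claimed isomorphism.

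There is no real obstacle left: the technical work — showing affine permutations always preserve $\RS(A,k)$, and conversely that degree considerations via the dual force $\deg(p_\pi) = 1$ when $1 < k < n-1$ — has already been carried out in Propositions~\ref{24.01.13} and~\ref{24.01.14}, relying in turn on Lemma~\ref{24.01.16}, Lemma~\ref{23.03.19}, and Theorem~\ref{24.01.17}. The proof of Theorem~\ref{24.01.19} is therefore little more than a one-line invocation of these two propositions together with Theorem~\ref{25.01.15}.
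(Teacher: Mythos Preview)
Your proposal is correct and follows exactly the paper's approach: the paper's proof of Theorem~\ref{24.01.19} is literally a two-line invocation of Proposition~\ref{24.01.13} for one containment and Proposition~\ref{24.01.14} for the other. Your version adds a bit more scaffolding (explicitly checking that the degree-one polynomials form a subgroup and transporting through the isomorphism of Theorem~\ref{25.01.15}), but the substance is identical.
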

\begin{proof}
If $\pi$ is an affine permutation, then $\pi \in \Per(\RS(A,k))$ by Proposition~\ref{24.01.13}. In addition, $\Per(\RS(A,k))$ contains only affine permutations by Propositions~\ref{24.01.14}. Thus, we get the result.
\end{proof}

As a consequence of the previous result, we obtain that the permutation group of a Reed-Solomon code with evaluation set $A=\Fq$ is given by all affine polynomials in $\F_q[x]$.
\begin{corollary}
For $1 < k < n$, we have 
\[\Per(\RS(\F_q,k))\cong \F_{q}[x]_{=1}.\]
\end{corollary}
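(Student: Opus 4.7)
The plan is to invoke Theorem~\ref{24.01.19} with the specialization $A = \F_q$, so that $n = q$ and $\{a_1,\ldots,a_n\} = \F_q$. For the range $1 < k < n-1$, the theorem directly supplies the isomorphism
\[
\Per(\RS(\F_q, k)) \cong \left\{p \in \F_q[x]_{=1} : p \text{ permutes } \F_q\right\}.
\]
All that remains is to verify that the permuting condition on the right becomes vacuous when the evaluation set is the entire field, so that the set collapses to $\F_q[x]_{=1}$.

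To see this, I would fix any $p(x) = ax + b \in \F_q[x]_{=1}$. The hypothesis $\deg(p) = 1$ forces $a \in \F_q^{*}$, so the affine map $x \mapsto ax + b$ admits the explicit two-sided inverse $y \mapsto a^{-1}(y - b)$ on $\F_q$. Hence every such $p$ is a bijection of $\F_q$ and, in particular, permutes $\F_q$. The reverse inclusion is immediate from set-theoretic containment, yielding
\[
\left\{p \in \F_q[x]_{=1} : p \text{ permutes } \F_q\right\} = \F_q[x]_{=1},
\]
and chaining this equality with the isomorphism above gives the corollary.

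There is no substantive obstacle in this argument: the heavy lifting was already carried out in Theorem~\ref{24.01.19}, and the corollary amounts to the elementary observation that a nondegenerate affine map over $\F_q$ is automatically a bijection, so the permuting constraint drops out in this case. If one wished to be more explicit, one could also note that this recovers the classical fact that $\Per(\RS(\F_q,k))$ is isomorphic to the affine group $\mathrm{AGL}(1,\F_q)$, of order $q(q-1)$, for $2 \le k \le n-2$.
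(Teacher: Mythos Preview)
Your argument is correct and essentially identical to the paper's: invoke Theorem~\ref{24.01.19} and then note that every degree-one polynomial over $\F_q$ is automatically a bijection of $\F_q$, so the permuting condition drops out and the right-hand side collapses to $\F_q[x]_{=1}$. You rightly flag that Theorem~\ref{24.01.19} only yields the range $1 < k < n-1$; the paper's proof has exactly the same limitation, and in fact the endpoint $k=n-1$ in the stated range fails for $q\ge 4$, since $\RS(\F_q,q-1)^\perp$ is the repetition code and hence $\Per(\RS(\F_q,q-1))=S_q$, which has order $q! > q(q-1)=|\F_q[x]_{=1}|$.
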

\begin{proof}
As every element in $\F_{q}[x]_{=1}$ permutes $\F_{q}$, we get that 
\[\left\{p \in \F_{q}[x]_{=1} : p \text{ permutes } \Fq\right\} = \F_{q}[x]_{=1}.\]
Thus, the result follows from Theorem~\ref{24.01.19}.
\end{proof}

We now prove that the permutation group of the Reed-Solomon code when $A=\F_q^*$ is defined by all affine permutations with a zero constant.

\begin{corollary}
For $1 < k < n$, we have 
\[\Per(\RS(\F_q^*,k))\cong \left\{p \in \F_{q}[x]_{=1} : p(0) = 0 \right\}.\]
\end{corollary}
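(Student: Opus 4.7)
The plan is to combine Theorem~\ref{24.01.19} with an explicit description of which affine polynomials permute $\Fq^*$. For $1<k<n-1$, Theorem~\ref{24.01.19} gives
\[\Per(\RS(\Fq^*,k))\cong\{p\in\Fq[x]_{=1}:p\text{ permutes }\Fq^*\},\]
so the task reduces to identifying this latter set with $\{p\in\Fq[x]_{=1}:p(0)=0\}$.

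For this identification, fix $p(x)=ax+b$ with $a\in\Fq^*$. The unique root of $p$ is $x_0=-b/a$, and $p$ restricts to a bijection $\Fq^*\to\Fq^*$ if and only if $x_0\notin\Fq^*$, which forces $x_0=0$ and hence $b=0$. Conversely, every $p(x)=ax$ with $a\in\Fq^*$ obviously maps $\Fq^*$ bijectively to itself. This yields the desired set equality, and combined with Theorem~\ref{24.01.19} proves the corollary in this range.

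The subtle point I anticipate is the boundary case $k=n-1$, which Theorem~\ref{24.01.19} does not cover. I would handle it directly via Lemma~\ref{23.03.19}: a short character-sum computation (using $\sum_{a\in\Fq^*}a^j=0$ for $1\le j\le n-1$) shows that $\RS(\Fq^*,n-1)^\perp$ is the one-dimensional code spanned by the evaluation vector $A$ itself. A permutation $\pi$ preserves this span iff $\pi(A)=\lambda A$ for some $\lambda\in\Fq^*$, which forces $\pi$ to act as multiplication by $\lambda$; via Theorem~\ref{25.01.15} this corresponds precisely to $p(x)=\lambda x$, again lying in $\{p\in\Fq[x]_{=1}:p(0)=0\}$. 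This boundary case is the main obstacle compared to the analogous $\Fq$ corollary, but the explicit description of the one-dimensional dual makes it tractable.
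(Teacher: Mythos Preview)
Your argument for the range $1<k<n-1$ is exactly the paper's: show that $p(x)=ax+b$ permutes $\Fq^*$ iff $b=0$, and then invoke Theorem~\ref{24.01.19}.

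Where you go further is at the boundary $k=n-1$. The paper's proof simply writes ``the result follows from Theorem~\ref{24.01.19}'' for the full range $1<k<n$, without treating $k=n-1$ separately, even though Theorem~\ref{24.01.19} is stated only for $1<k<n-1$. Your dual-code argument is correct and genuinely closes that gap: since $\sum_{a\in\Fq^*}a^{j+1}=0$ for $0\le j\le n-2$, the one-dimensional dual $\RS(\Fq^*,n-1)^\perp$ is spanned by $A$ itself; a permutation $\pi$ preserves this line iff $a_{\pi(i)}=\lambda a_i$ for some fixed $\lambda\in\Fq^*$, and Remark~\ref{24.01.10} then forces $p_\pi(x)=\lambda x$. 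So your proposal is not merely an alternative route but a strictly more complete proof of the corollary as stated. (Note, incidentally, that the analogous shortcut does \emph{not} work for the $\Fq$ corollary: there the dual of $\RS(\Fq,q-1)$ is the all-ones line, whose permutation group is all of $S_q$, so the boundary case behaves differently for the two evaluation sets.)
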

\begin{proof}
An element $p$ in $\F_{q}[x]_{=1}$ permutes $\F_q^*$ if and only if $p = ax+0$. Thus, we get that 
\[\left\{p \in \F_{q}[x]_{=1} : p \text{ permutes } \F_q^*\right\} = \left\{p \in \F_{q}[x]_{=1} : p(0) = 0 \right\}.\]
The result follows from Theorem~\ref{24.01.19}.
\end{proof}

As the following example shows, Theorem~\ref{24.01.19} may not be true when $ k = n-1$.
\begin{example}
We consider an example where the dimension $k$ does not satisfy the condition $1 < k < n-1$. Take $A = \left(0,1,4,6\right) \in \F_{13}^4$ and $k=3 = 4 - 1$. A generator matrix of the Reed-Solomon code $\RS(A,3)$ is given by
\[G =
\begin{bmatrix}
1 & 1 & 1 & 1 \\
0 & 1 & 4 & 6 \\
0 & 1 & 3 & 10
\end{bmatrix}.\]
It can be verified with SAGE~\cite{sage}, Magma~\cite{magma}, or {\it Macaulay2}~\cite{Mac2} using the Coding Theory Package~\cite{cod_package} that the permutation group $\Per(\RS(A,3))$ is isomorphic to $S_3$. However, only three affine polynomials $x, 3x+1, 9x+4$ preserve $A$. These polynomials correspond to the elements $(),(123),(132) \in S_3$ which form a proper subgroup. Therefore, the affine group does not generate the permutation automorphism group. \footnote{The code to reproduce the example can be found on \texttt{https://github.com/junbolau/permutation{\_}group{\_}RS{\_}code}}
\end{example}

\begin{remark}\rm
It is important to highlight that the permutation group is just one of the various groups of isometries that can be considered when working with a linear code $C$. The other two important groups are the following.
\begin{itemize}
\item The automorphism group $Aut(C)$, which includes transformations of the form $(c_1,\ldots,c_n)\mapsto (v_1 c_{\pi(1)},\ldots,v_nc_{\pi(n)})$, where $v\in(\mathbb{F}_q^\ast)^n$ and $\pi\in S_n$.
\item The semilinear automorphism group $SAut(C)$, which includes transformations of the form $(c_1,\ldots,c_n)\mapsto (v_1\tau( c_{\pi(1)}),\ldots,v_n\tau(c_{\pi(n)}))$, where $v$ and $\pi$ are as before and $\tau$ is an autormophism of $\mathbb{F}_q$.
\end{itemize}
\end{remark}
For a Reed-Solomon code $C=\RS(A,k)$, the following example shows that the semilinear automorphism group $SAut(C)$ does not only depend on $A$.
\begin{example}
Let $q=9$ and $\alpha$ be a primitive element of $\mathbb{F}_9$ such that $\alpha^3=-\alpha+1$. Let $A=(0,1,2,\alpha^2,\alpha^6)$ and $C=\RS(A,4)$. We have that $\tau(y)=y^3$ is an automorphism of the field. Since $\tau(x^2)=x^6$ and $x^2(A)=x^6(A)$ and $x^9(A)=x(A)$, then $C$ is fixed under the action of $\tau$ and thus $\mathrm{Per}(C)\subsetneq SAut(C)$. 

On the other hand, the Reed-Solomon code $D=RS(A,3)$ is not fixed by $\tau$ since $\tau(x)=x^3$. 
\end{example}

\section{Conclusions}\label{conclusion}
In this work, we proved that the permutation group of a Reed-Solomon code is given by polynomials of degree one that leave the set of evaluation points invariant. Our results showed an elementary proof of the well-known cases of the permutation group of the Reed-Solomon code when the set of evaluation points is the whole finite field or the multiplicative group. As future work, we are interested in determining the set of semilinear isometries of Reed-Solomon codes. We also aim to extend the presented technique to compute the automorphism group of generalized Reed-Solomon and Goppa codes.

\bibliographystyle{IEEEtran}
\bibliography{biblio}

\end{document}